\newtheorem{problem}{Problem}
\newcommand{\ltdots}{..}
\title{Chaining with Overlaps Revisited} 
\titlerunning{Chaining with Overlaps}
\author{Veli M\"akinen}{Department of Computer Science, University of Helsinki, Finland }{veli.makinen@helsinki.fi}{https://orcid.org/0000-0003-4454-1493}{}
\author{Kristoffer Sahlin}{Department of Mathematics, Science for Life Laboratory, Stockholm University, Sweden}{ksahlin@math.su.se}{https://orcid.org/0000-0001-7378-2320}{}
\authorrunning{V. M\"akinen, K. Sahlin}
\keywords{Sparse Dynamic Programming, Chaining, Maximal Exact Matches, Longest Common Subsequence}
\begin{document}

\maketitle

\begin{abstract}
Chaining algorithms aim to form a semi-global alignment of two sequences based on a set of anchoring local alignments as input. Depending on the optimization criteria and the exact definition of a chain, there are several $O(n \log n)$ time algorithms to solve this problem optimally, where $n$ is the number of input anchors. 

In this paper, we focus on a formulation allowing the anchors to overlap in a chain. This formulation was studied by Shibuya and Kurochkin (WABI 2003), but their algorithm comes with no proof of correctness. We revisit and modify their algorithm to consider a strict definition of precedence relation on anchors, adding the required derivation to convince on the correctness of the resulting algorithm that runs in $O(n \log^2 n)$ time on anchors formed by exact matches. With the more relaxed definition of precedence relation considered by Shibuya and Kurochkin or when anchors are non-nested such as matches of uniform length ($k$-mers), the algorithm takes $O(n \log n)$ time. 

We also establish a connection between chaining with overlaps and the widely studied \emph{longest common subsequence} problem.
\end{abstract}

\section{Introduction}

As optimal alignment of two strings takes quadratic time (which has recently been shown to be conditionally hard to improve \cite{BI15}), there have been several attempts to avoid this bottleneck. One such technique is \emph{sparse dynamic programming} \cite{Epp92}, where a sparse set of cells of the dynamic programming matrix is identified whose computation is sufficient in computing the optimal alignment. This does not avoid the quadratic dependency in the worst case, so a slightly more heuristic \emph{chaining} approach has been introduced in the context of computational genomics: Given a precomputed set of plausible \emph{anchoring local matches}, extract a chain of matches that forms a good (semi-global) alignment. 

In this paper, we investigate a chaining formulation that takes properly the overlaps between anchors into account. Namely, if anchors are not allowed to overlap in the solution, there are already several $O(n \log n)$ time solutions for various formulations of the chaining problem \cite{MM95,FMW97,AO03,AO05}, where $n$ is the number of anchors. Some of the solutions and extensions focus on asymmetric measures, where overlaps are allowed in one of the strings \cite{MSY12,Maketal19}, or add other features that make the problem even harder \cite{UMR11}. While these formulations are useful in different contexts, this is an undesirable consequence in, \emph{e.g.}, string alignment, where the solution may be different depending on which string is used to traverse the ordered anchors, and specifically the solution may overcount the amount of aligned characters.

The fully symmetric chaining variant allows arbitrary overlaps, guarantees not to overcount the amount of aligned characters, and in addition, is particularly important for its connections to the \emph{Longest Common Subsequence} problem (LCS): An optimal chain in this formulation corresponds to a LCS of the input strings, restricted to the matches included in the anchors. As far as we know, except for trivial $O(n^2)$ time solutions, only Shibuya and Kurochkin \cite{SK03} have proposed a solution aiming to solve the fully symmetric case of allowing overlaps of anchors in both strings simultaneously. 

We revisit the algorithm by Shibuya and Kurochkin \cite{SK03} and propose a modification that takes into account a strict order for the anchors. This modified algorithm runs in $O(n \log^2 n)$ time on exact matches as input. When relaxing the precedence order or when the input consist of non-nested anchors such as $k$-mer matches, the algorithm can be simplified to take $O(n \log n)$ time. The resulting algorithms are slightly simpler than the original \cite{SK03}, requiring only a general data structure for semi-dynamic range maximum queries, while  the original uses in addition a tailored structure. We also provide detailed derivation of the algorithms, while the original \cite{SK03} comes with no proof of correctness. Finally, we show that the relaxed chaining problem also solves a restricted version of the LCS problem.

\section{Chaining problems}

Let $T$ be a long text string and $P$ short pattern string. An \emph{anchor} interval pair $([a \ltdots b],[c \ltdots d])$ denotes a \emph{match} between $T[a\ltdots b]$ and $P[c\ltdots d]$. For now, we assume these matches are precomputed, and they could be either full identities or close similarities. We often abstract out the original source of the anchors referring $[a \ltdots b]$ as an interval in the \emph{first dimension} and $[c \ltdots d]$ as an interval in the \emph{second dimension}. We denote the endpoints of the intervals in anchor $I$ as $I.x$ for $x \in \{a,b,c,d\}$.
We assume the endpoints to be positive integers. 

Given two anchors $I'$ and $I$ we define two relations: \emph{precedence} and \emph{overlap}. The former is denoted $I'\prec I$ and this relation holds whenever $I'.a< I.a$, $I'.b< I.b$, $I'.c< I.c$, and $I'.d< I.d$. The latter is denoted $I'\cap I$ and holds whenever $[I'.a \ltdots I'.b] \cap [I.a \ltdots I.b]\neq \emptyset$ or $[I'.c \ltdots I'.d] \cap [I.c \ltdots I.d]\neq \emptyset$. The complement of the overlap relation is denoted $\neg I'\cap I$ (an empty intersection is interpreted as truth value False). We use the overlap relation only for $I' \prec I$. Figure~\ref{fig:precedence} illustrates these concepts.

\begin{figure}
    \centering
    \begin{subfigure}[t]{0.5\textwidth}
        \centering
        \begin{tikzpicture}
        \draw (0,2) to (5,2) node {\verb+  +$T$};
        \draw (0,0) to (5,0) node {\verb+  +$P$};
        \draw[densely dotted,thick] (1.5,2) node[rectangle,draw](Iptop) {\verb+      +};
        \draw (3.5,2) node[rectangle,draw](Itop) {\verb+    +};
        \draw[densely dotted,thick] (2,0) node[rectangle,draw](Ipbottom) {\verb+      +};
        \draw (3.5,0) node[rectangle,draw](Ibottom) {\verb+    +};
        \draw[-] (Iptop) to (Ipbottom);
        \draw[-] (Itop) to (Ibottom);
        \end{tikzpicture}
        \caption{No overlaps}
    \end{subfigure}%
    ~
    \begin{subfigure}[t]{0.5\textwidth}
        \centering
        \begin{tikzpicture}
        \draw (0,2) to (5,2) node {\verb+  +$T$};
        \draw (0,0) to (5,0) node {\verb+  +$P$};
        \draw[densely dotted,thick] (1.5,2) node[rectangle,draw](Iptop) {\verb+      +};
        \draw (3.5,2) node[rectangle,draw](Itop) {\verb+    +};
        \draw[densely dotted,thick] (3,0) node[rectangle,draw](Ipbottom) {\verb+      +};
        \draw (3.5,0) node[rectangle,draw](Ibottom) {\verb+    +};
        \draw[-] (Iptop) to (Ipbottom);
        \draw[-] (Itop) to (Ibottom);
        \end{tikzpicture}
        \caption{One-sided overlaps}
    \end{subfigure}
    \\
    \vspace{1cm}
    \begin{subfigure}[t]{0.5\textwidth}
        \centering
        \begin{tikzpicture}
        \draw (0,2) to (5,2) node {\verb+  +$T$};
        \draw (0,0) to (5,0) node {\verb+  +$P$};
        \draw[densely dotted,thick] (3.1,2) node[rectangle,draw](Iptop) {\verb+      +};
        \draw (3.5,2) node[rectangle,draw](Itop) {\verb+    +};
        \draw[densely dotted,thick] (2.7,0) node[rectangle,draw](Ipbottom) {\verb+      +};
        \draw (3.5,0) node[rectangle,draw](Ibottom) {\verb+    +};
        \draw[-] (Iptop) to (Ipbottom);
        \draw[-] (Itop) to (Ibottom);
        \end{tikzpicture}        
        \caption{Larger overlap in the first dimension}
    \end{subfigure}%
    ~ 
    \begin{subfigure}[t]{0.5\textwidth}
        \centering
        \begin{tikzpicture}
        \draw (0,2) to (5,2) node {\verb+  +$T$};
        \draw (0,0) to (5,0) node {\verb+  +$P$};
        \draw[densely dotted,thick] (2.7,2) node[rectangle,draw](Iptop) {\verb+      +};
        \draw (3.5,2) node[rectangle,draw](Itop) {\verb+    +};
        \draw[densely dotted,thick] (3,0) node[rectangle,draw](Ipbottom) {\verb+      +};
        \draw (3.5,0) node[rectangle,draw](Ibottom) {\verb+    +};
        \draw[-] (Iptop) to (Ipbottom);
        \draw[-] (Itop) to (Ibottom);
        \end{tikzpicture}        
        \caption{Larger overlap in the second dimension}
    \end{subfigure}
    
    \caption{Different scenarios illustrating precedence and overlap of anchors. Dotted and solid rectangles denote anchors $I'$ and $I$, respectively. In all these cases it holds $I' \prec I$. The separation into different cases based on the overlaps is determined by the properties of the chaining algorithms we study in the sequel. In (a) and (b) no overlaps are allowed in the first dimension, while in (c) and (d) anchors are assumed to overlap in the first dimension. \label{fig:precedence}}
\end{figure}
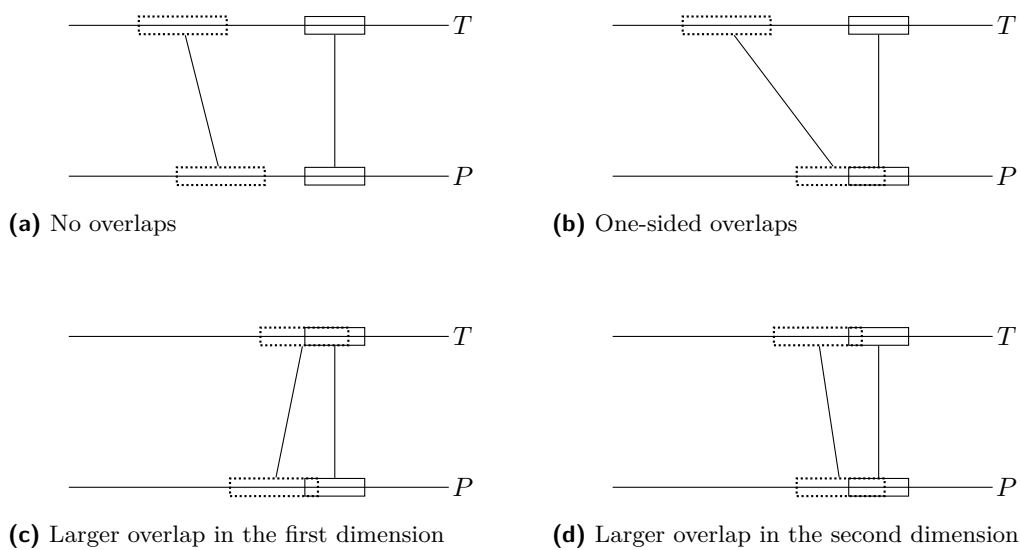

\begin{problem}[Chaining with overlaps\label{prob:colinearchaining}]
Let $A[1..N]$ be an array of anchor interval pairs $([a \ltdots b],[c \ltdots d])$. For each $i$, $1\leq i \leq N$, compute the \emph{symmetric ordered coverage} score \\ $\max_{\text{chains } S^i} \mathtt{coverage}(S^i)$, where
\begin{itemize}
\item $S^i[1..n]$ is an ordered subset (\emph{chain}) of pairs from $A$,
\item $S^i[j-1]\prec S^i[j]$, for all $1< j \leq n$,
\item $S^i[n]=A[i]$, and 
\begin{align*}
\mathtt{coverage}(S^i)=
\Bigg(\sum_{j=1}^{n-1} \min\bigg(&\min(S^i[j+1].a,S^i[j].b+1)-S^i[j].a,
\\&\min(S^i[j+1].c,S^i[j].d+1)-S^i[j].c\bigg)\Bigg)+\\
\min\bigg(&S^i[n].b-S^i[n].a+1,S^i[n].d-S^i[n].c+1\bigg).
\end{align*}
\end{itemize}
\end{problem}

Notice that for chains containing no overlaps, that is, $S^i[j].b<S^i[j+1].a$ and $S^i[j].d<S^i[j+1].c$, 
the measure $\mathtt{coverage}(S^i)$ is just the sum of lengths of the anchors in it, where length is defined as the minimum of the interval lengths. For overlapping cases, only the segment before the overlap is added to the score. For example, let a chain $S$ be $([1..5],[2..6]),([3..8],[5..10])$. Then $\mathtt{coverage}(S)=\min(8-3+1,10-5+1)+\min(\min(3,5+1)-1,\min(5,6+1)-2)=6+2=8$. That is, while the total length of the anchors in $S$ is $10$, their union covers only 8 units in the first dimension. The measure is clearly symmetric, but the term \emph{ordered coverage} requires more insight: Notice that it is not sufficient to measure the size of the union of anchors in a chain independently and take their minimum. Instead, the proposed measure adds to the score, one anchor at the time, the minimum size of the newly covered region. Interpreted through the original source of anchors from string $T$ and $P$, an optimal chain $S$ under this measure induces an alignment between $T$ and $P$ with exactly $\mathtt{coverage}(S)$ matching characters. Asymmetric formulations studied earlier can overestimate this amount. This proposed symmetric ordered coverage measure is thus important especially in various computational genomics applications, where optimal alignments are too expensive to be computed. We establish this alignment connection through the widely studied longest common subsequence problem: see Sect.~\ref{sect:lcs}.   

We develop an $O(N \log^2 N)$ time algorithm to solve this chaining with overlaps problem assuming one additional property of the input:
\begin{itemize}
    \item \emph{Equal Match Length property}: For each anchor $I$ it holds $I.b-I.a=I.d-I.c$.
\end{itemize}
If the set of anchors is computed e.g. by \emph{Maximal Exact Matches (MEMs)} \cite{Gus97}, the input automatically satisfies the Equal Match Length property. 

Our algorithm is based on techniques by Shibuya and Kurochkin \cite{SK03}, who solved a version of the problem with the definition of precedence relaxed to consider only start points of intervals: $I'$ \emph{weakly precedes} $I$ if $I'.a<I.a$ and $I'.c< I.c$. Let us denote this relation $I' \prec^w I$. 

\begin{problem}[Chaining with overlaps and weak precedence\label{prob:colinearchainingweak}]
Let $A[1..N]$ be an array of anchor interval pairs $([a \ltdots b],[c \ltdots d])$. For each $i$, $1\leq i \leq N$, compute the \emph{symmetric weakly ordered coverage} score $\max_{\text{weak chains } S^i} \mathtt{coverage}(S^i)$, defined as in Problem~\ref{prob:colinearchaining}, with the precedence condition relaxed to  
\begin{itemize}
\item $S^i[j-1]\prec^w S^i[j]$, for all $1< j \leq n$.
\end{itemize}
\end{problem}

To see the connection of the problems, consider a chain $S$ for which $S[j-1]\prec^w S[j]$ holds but not $S[j-1]\prec S[j]$ for some $j$. That is, at least one of the intervals of $S[j]$ is nested inside (i.e. is subset of) the corresponding interval of $S[j-1]$. Say $[S[j].a\ltdots S[j].b]$ is nested in $[S[j-1].a\ltdots S[j-1].b]$ with $S[j-1].b-S[j].b\geq     S[j-1].d-S[j].d$ (the other case is symmetric). Consider modifying $S[j-1]$ into $S[j-1]'$, where $S[j-1]'.a=S[j-1].a,S[j-1]'.b=S[j].b-1,S[j-1]'.c=S[j-1].c,$ and $S[j-1]'.d=S[j-1].d-(S[j-1].b-S[j].b)-1$. Assuming Equal Match Length property such adjustment is possible and causes $S[j-1]'\prec S[j]$ without affecting the score. One can thus adjust any chain $S$ for which the weak precedence relation holds into another chain $S'$, where the (strict) precedence relation holds, so that  $\mathtt{coverage}(S)=\mathtt{coverage}(S')$.

As can be seen from the above construction, the two problems are identical when the input anchors are non-nested. This happens e.g. when anchors are matches of uniform length ($k$-mer matches). Even more importantly, if one is only interested in the overall maximum scoring chain, the two problems produce the same result.

\begin{lemma}
\label{lemma:weakisenough}
Assuming Equal Match Length property, the maximum of the solutions from Problem~\ref{prob:colinearchaining} and Problem~\ref{prob:colinearchainingweak} are the same, that is,
\[
\max_{1\leq i \leq N} \max_{\text{chains } S^i} \mathtt{coverage}(S^i)=
\max_{1\leq i \leq N} \max_{\text{weak chains } S^i} \mathtt{coverage}(S^i).
\]
\end{lemma}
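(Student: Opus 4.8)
The plan is to prove the lemma as two inequalities between the two global maxima. The direction $\max_i \max_S \mathtt{coverage}(S^i) \leq \max_i \max_S \mathtt{coverage}(S^i)$ over weak chains is immediate: every chain satisfying the strict precedence $\prec$ also satisfies the weak precedence $\prec^w$ (since $I'.a<I.a$ and $I'.c<I.c$ are among the four conditions defining $\prec$), so every chain is also a weak chain with the same coverage, and hence the weak-chain maximum is at least the strict-chain maximum. The real content is the reverse inequality, showing that allowing nested anchors via weak precedence cannot produce a strictly larger global optimum.

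For the reverse direction, I would take an optimal weak chain $S$ attaining the right-hand maximum and transform it into a (strict) chain $S'$ of equal coverage, using exactly the adjustment construction already described in the paragraph preceding the lemma. First I would formalize that if $S[j-1] \prec^w S[j]$ but $\neg(S[j-1] \prec S[j])$, then (because the start-point inequalities already hold) the failure must be in an endpoint inequality, i.e. at least one of the intervals of $S[j]$ is nested in the corresponding interval of $S[j-1]$; invoking the Equal Match Length property I would argue one can shrink $S[j-1]$ from its right end into $S[j-1]'$ so that $S[j-1]' \prec S[j]$ holds. The key step is to verify that this local surgery leaves $\mathtt{coverage}$ unchanged: the term for $S[j-1]$ as a middle element is $\min(\min(S[j].a, S[j-1].b+1) - S[j-1].a,\ \min(S[j].c, S[j-1].d+1) - S[j-1].c)$, and because the new right endpoints of $S[j-1]'$ are chosen to sit just left of $S[j]$'s nested endpoint, the two $\min$ arguments collapse to the start-to-$S[j]$-start distances, which depend only on the start points that the adjustment preserves. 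I would then handle the symmetric case ($S[j].c\ltdots S[j].d$ nested instead) identically.

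The main obstacle, and the step I expect to require the most care, is arguing that these adjustments can be applied to \emph{every} offending index simultaneously (or iteratively) without one adjustment breaking a previously fixed relation or altering the coverage of a \emph{neighboring} term. Concretely, shrinking $S[j-1]$ changes only the $j{-}1$-st summand and the relation between $S[j-1]$ and $S[j]$; I must confirm it does not disturb the relation $S[j-2] \prec S[j-1]$ (shrinking the right end preserves $S[j-2].b < S[j-1].b$ only if the shrinkage is not too aggressive, which follows from $S[j].b > S[j-2].b$ via the weak chain ordering) nor the $(j{-}2)$-nd summand (which depends on $S[j-1].a, S[j-1].c$, both unchanged). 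I would therefore process indices in increasing order of $j$ and maintain the invariant that the prefix up to $j$ already satisfies strict precedence, checking that each new adjustment is compatible. Once this is established, the resulting $S'$ is a strict chain with $\mathtt{coverage}(S') = \mathtt{coverage}(S)$, so the strict-chain maximum is at least the weak-chain maximum, completing the proof.
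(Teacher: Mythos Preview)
Your argument has a genuine gap at the point where you invoke the shrinking construction. The adjusted anchor $S[j-1]'$ that you build (with $S[j-1]'.b=S[j].b-1$, etc.) need not belong to the input set $A$, whereas both Problem~\ref{prob:colinearchaining} and Problem~\ref{prob:colinearchainingweak} define a (weak) chain as an ordered \emph{subset of $A$}. Hence the sequence $S'$ you end up with is in general not a legal chain at all, and its coverage does not witness the left-hand maximum. The paragraph preceding the lemma uses the shrinking construction only to argue that the two problems are morally the same; it does not claim to output a valid chain over $A$. This is exactly why the paper's proof takes a different route: instead of shrinking $S[j-1]$, it \emph{removes} the nested anchor $S[j]$ (processing right to left, after first arguing that an optimal weak chain can be taken to end with a strict-precedence step), and then spends the bulk of the proof on a case analysis over the overlap relations among $S[j-1]$, $S[j]$, $S[j+1]$ to show that each removal does not decrease $\mathtt{coverage}$. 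After all removals the surviving anchors are still members of $A$, so the result is a bona fide strict chain.

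A secondary issue: your compatibility check for left-to-right processing asserts that ``$S[j].b>S[j-2].b$ via the weak chain ordering'', but weak precedence $\prec^w$ constrains only the start coordinates $a$ and $c$, not the end coordinates $b$ and $d$. So even setting aside the membership-in-$A$ problem, the claim that shrinking $S[j-1]$ cannot break $S[j-2]\prec S[j-1]'$ is not justified by the reason you give.
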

\begin{proof}
Consider an optimal chain $S^i[1..n]$ for Problem~\ref{prob:colinearchainingweak}. If $S^i[n-1]\prec S^i[n]$ does not hold, then one of the intervals of $S^i[n]$ is nested inside the corresponding interval of $S^i[n-1]$. This means that for $S^i[n-1]=A[i']$ it holds $\max_{\text{weak chains } S^{i'}} \mathtt{coverage}(S^{i'})\geq \max_{\text{weak chains } S^{i}} \mathtt{coverage}(S^{i})$. Continuing this induction, one observes that there is an overall maximum scoring chain, say $S$, that ends with strict precedence and moreover for all anchors $I$ in this chain holds $I\prec S[n]$. 

Consider now the construction given before this lemma that converts $S$ into $S'$. Adjust the construction so that instead of modifying $S[j-1]$ into $S[j-1]'$, just remove $S[j]$. Repeat this from right to left until strict precedence holds in the whole modified chain $S'$. Such $S'$ is an optimal solution to both problems, as its score remains unchanged during the process. 

To see this, consider the same case as earlier with $[S[j].a\ltdots S[j].b]$ being nested in $[S[j-1].a\ltdots S[j-1].b]$ with $S[j-1].b-S[j].b\geq S[j-1].d-S[j].d$. By induction from the base case of $S[n-1] \prec S[n]$, we know that $S[j]\prec S[j+1]$ and $S[j-1]\prec S[j+1]$ if $S[j-1] \prec^w S[j]$ is the first nested case from the right. 

To see that dropping $S[j]$ is safe, we need to consider cases a) $\neg S[j] \cap S[j+1]$ and $\neg S[j-1] \cap S[j+1]$ b) $S[j] \cap S[j+1]$ and $\neg S[j-1] \cap S[j+1]$, c) $\neg S[j] \cap S[j+1]$ and $S[j-1] \cap S[j+1]$, and d) $S[j] \cap S[j+1]$ and $S[j-1] \cap S[j+1]$. We cover here only case d), as all the other cases use similar or easier reasoning. 
We consider score induced by sub-chains $S[j-1],S[j],S[j+1]$ and $S[j-1],S[j+1]$, respectively, assuming that the chains continue, so that the coverage induced by $S[j+1]$ will not yet be added to the total score. 
In case d) the score induced by the sub-chain $S[j-1],S[j],S[j+1]$ is $S[j].a-S[j-1].a+\min(S[j+1].a-S[j].a,S[j+1].c-S[j].c)=\min(S[j+1].a-S[j-1].a,S[j].a-S[j-1].a+S[j+1].c-S[j].c)$. The score induced by  the sub-chain $S[j-1],S[j+1]$ is $\min(S[j+1].a-S[j-1].a,S[j+1].c-S[j-1].c)$. Since $S[j].a-S[j-1].a\leq S[j].c-S[j-1].c$, the score induced by the sub-chain $S[j-1],S[j+1]$ is at least as high as the score induced by the sub-chain $S[j-1],S[j],S[j+1]$. 
\end{proof}

Shibuya and Kurochkin \cite{SK03} gave an $O(N \log N)$ time algorithm for Problem~\ref{prob:colinearchainingweak}, but their algorithm comes with no proof of correctness. Our goal in this paper is to complement the original proposal with the required derivation steps to see that one can indeed solve the problem correctly in $O(N \log N)$ time. Instead of proving directly the correctness of the original proposal, we derive a simplified version of the algorithm, whose correctness is easier to verify.

We derive this algorithm in three steps: First we consider one-sided overlaps of anchors. Then we modify this algorithm to handle two-sided overlaps of anchors, solving Problem~\ref{prob:colinearchaining}. Finally, we show that the use of strict precedence relation $I' \prec I$ can be relaxed to $I' \prec^w I$ in order to solve Problem~\ref{prob:colinearchainingweak}.

\section{Chaining algorithms}

Our goal is here to study the variations of chaining algorithms under the symmetric ordered coverage. We will give chaining algorithms under the symmetric ordered coverage and equal-match property taking $O(n\log n)$ time. In order to do this we will structure the recurrence relations that solve Problems 1 and 2 such that one can factor out dependencies between anchors into different cases that are handled by evaluation order of the recurrences, range search, and special features of the scoring function.   
Assume now that the anchor interval pairs are stored in an array $A[1..N]$ in arbitrary order. We fill a table $C[1\ltdots N]$ so that $C[j]$ gives the maximum symmetric ordered coverage of using any subset of pairs that precede $A[j]$, before the effect of the pair $A[j]$ is added to the score: Hence, $\max_j C^+[j]$, where $C^+[j]=\min(A[j].b-A[j].a+1,A[j].d-A[j].c+1)+ C[j]$, gives the total maximum symmetric ordered coverage.

After considering separately non-overlapping and overlapping cases (see Fig.~\ref{fig:precedence}), one observes that $C[j]$ can be computed by $\max(0,D[j],O[j])$, where
\begin{eqnarray*}
D[j]&=&
\begin{array}{c} 
\max\limits_{\begin{array}{c} j':\\ A[j']\prec A[j],\\ \neg A[j']\cap A[j]\end{array}} C[j']+\min\left\{\begin{array}{l} A[j'].b-A[j'].a+1,\\A[j'].d-A[j'].c+1\end{array}\right.\end{array} \text{ and}\\
O[j]&=&
\max\limits_{\begin{array}{c} j':\\ A[j']\prec A[j],\\ A[j']\cap A[j]\end{array}}  
C[j']+\min\left\{\begin{array}{l} \min(A[j].a,A[j'].b+1)-A[j'].a,\\\min(A[j].c,A[j'].d+1)-A[j'].c\end{array}
\right..
\end{eqnarray*}

These recurrences can be computed in $O(N^2)$ time: Sort $A$ by values $A[i].b$ to handle one dimension of the precedence relation. Then compute each $C[j]$ in this order by scanning previously computed values $C[j']$ and check precedence in the other dimension. Add the coverage values ($+\min$ part) depending on the overlap relation. Select the maximum among the options of $C[j']$ added with the coverage value.  

By assuming Equal Match Length property, we can simplify the recurrence of $C[j]=\max(0,D[j],O[j])$ with\\
\begin{eqnarray*}
D[j]&=& 
\max\limits_{\begin{array}{c} j':\\ A[j']\prec A[j],\\ \neg A[j']\cap A[j]\end{array}} C[j']+ A[j'].b-A[j'].a+1 \qquad\text{and}\\
O[j]&=&\max\limits_{\begin{array}{c} j':\\ A[j']\prec A[j],\\ A[j']\cap A[j]\end{array}} C[j']+\min\left\{\begin{array}{l} A[j].a-A[j'].a,\\A[j].c-A[j'].c\end{array}\right..
\end{eqnarray*}

\subsection{One-sided overlaps \label{sect:one-sided}}

We will now present an algorithm that works for \emph{one-sided overlaps} (see Fig.~\ref{fig:precedence}): We restrict the chains so that no two anchors in the solution overlap in the first dimension (that is, in $T$). This lets us modify the recurrence of $O[j]$ into the form\\
\begin{eqnarray*}
O[j]&=&A[j].c+\max\limits_{\begin{array}{c} j':\\ A[j']\prec A[j], A[j'].b<A[j].a,\\ A[j']\cap A[j]\end{array}} C[j']-A[j'].c\;\;.
\end{eqnarray*}

That is, we added the constraint on overlaps, removed the then obsolete $\min()$ and took out the value not affected by $\max()$. Now it is easy to see that the evaluation of the values can be done when visiting the starting points of the anchors in the first dimension, and the maximizations over range of values can be done using search trees, specified in the next lemma. We also specify a two-dimensional version of this structure, as we need it later.

\begin{lemma}
\label{lemma:searchtree}
The following three operations can be supported with a one-dimensional range search tree $\mathcal{T}$ in time $O(\log n)$, where $n$ is the number of search keys inserted to the tree.
\begin{itemize}
\item $\mathtt{Update}(k,\mathtt{val})$: Update $\mathtt{value}$ associated with $\mathtt{key}=k$ into $\mathtt{val}$.
\item $\mathtt{Upgrade}(k,\mathtt{val})$: Update $\mathtt{value}$ associated with $\mathtt{key}=k$ into $\max(\mathtt{val},\mathtt{value})$.
\item $\mathtt{RMaxQ}(c,d)$: Return maximum $\mathtt{value}$, where $c\leq \mathtt{key} \leq d$ (\emph{Range Maximum Query}).
\end{itemize}
Moreover, the search tree can be built in $O(n)$ time, given the $n$ pairs $(\mathtt{key},\mathtt{value})$ sorted by component $\mathtt{key}$. 
\end{lemma}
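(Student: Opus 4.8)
The plan is to prove Lemma~\ref{lemma:searchtree} by appealing to a standard balanced binary search tree augmented to store subtree maxima, and then to verify that each of the three operations respects the claimed $O(\log n)$ bound. The key observation is that the set of search keys is fixed in advance (the lemma speaks of $n$ keys inserted into the tree, and we are even handed them sorted), so we do not need a fully dynamic structure supporting arbitrary insertions/deletions; it suffices to build a \emph{static} balanced tree over the key universe and only update the associated values. This is what makes the $O(n)$ build time and $O(\log n)$ per-operation time achievable simultaneously.

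First I would describe the data structure: build a balanced binary search tree (e.g. a perfectly balanced or weight-balanced tree) whose leaves, in left-to-right order, hold the $n$ keys in sorted order, each leaf storing the $\mathtt{value}$ currently associated with that key. Each internal node stores the maximum of the $\mathtt{value}$ fields over all leaves in its subtree. Since the keys arrive sorted, the tree skeleton can be constructed bottom-up in $O(n)$ time, and the subtree-maximum fields are filled in by a single post-order pass, also $O(n)$; this establishes the build-time claim. I would initialise all values to $-\infty$ (or to whatever neutral element the application requires) so that uninserted keys never win a query.

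Next I would treat the three operations in turn. For $\mathtt{Update}(k,\mathtt{val})$, locate the leaf for key $k$ by the usual root-to-leaf search in $O(\log n)$ steps, overwrite its value with $\mathtt{val}$, and then walk back up to the root recomputing the subtree maximum at each node on the path as the max of its two children; only the $O(\log n)$ ancestors of the leaf can change, so this is $O(\log n)$. The operation $\mathtt{Upgrade}(k,\mathtt{val})$ is identical except that the leaf's value is replaced by $\max(\mathtt{val},\mathtt{value})$ before the upward recomputation. For $\mathtt{RMaxQ}(c,d)$, I would use the standard range-decomposition of the query interval $[c,d]$ into $O(\log n)$ \emph{canonical} subtrees: descend from the root to the split point where the search paths to $c$ and to $d$ diverge, and along each of the two paths collect the subtree-maximum fields of the $O(\log n)$ maximal subtrees that lie entirely inside $[c,d]$, returning the maximum of these precomputed fields together with any qualifying boundary leaves. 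Since each such field is read in $O(1)$ and there are $O(\log n)$ of them, the query runs in $O(\log n)$.

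The verification here is almost entirely routine, so there is no single hard step; the only point demanding a little care is the range-query decomposition, namely checking that the canonical-subtree argument indeed covers $[c,d]$ exactly and uses only $O(\log n)$ precomputed maxima, which is the classical invariant of segment-tree-style range queries. I would also note explicitly that the correctness of $\mathtt{RMaxQ}$ rests on the subtree-maximum invariant being maintained by every $\mathtt{Update}$ and $\mathtt{Upgrade}$, which the upward recomputation guarantees. Finally, for the two-dimensional version promised in the statement, I would remark that it follows by the standard range-tree construction: a primary balanced tree on the first coordinate whose every node carries a secondary one-dimensional structure of the above kind over the second coordinate, so that each operation touches $O(\log n)$ secondary structures and thus costs $O(\log^2 n)$ time, with an $O(n\log n)$ build; this is what yields the extra $\log$ factor in the $O(N\log^2 N)$ running time of the main algorithm.
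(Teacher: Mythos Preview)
Your proposal is correct and takes essentially the same approach as the paper: the paper's own justification is a single sentence stating that the lemma follows directly by maintaining subtree maxima in a standard range search structure, citing textbooks and prior sparse dynamic programming work. You have simply spelled out the details of that standard construction (static balanced tree on the sorted keys, subtree-max invariant, canonical decomposition for the range query), which is exactly what the paper leaves implicit; your added remark on the two-dimensional variant likewise matches the paper's Lemma~\ref{lemma:searchtree2d}.
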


\begin{lemma}
\label{lemma:searchtree2d}
The following two operations can be supported with a two-dimensional range search tree $\mathcal{T}$ in time $O(\log^2 n)$, where $n$ is the number of search keys inserted to the tree.
\begin{itemize}
\item $\mathtt{Update}(p,s,\mathtt{val})$: Update $\mathtt{value}$ associated with $\mathtt{primary\;key}=p$ and $\mathtt{secondary\;key}=s$ into $\mathtt{val}$.
\item $\mathtt{Upgrade}(p,s,\mathtt{val})$: Update $\mathtt{value}$ associated with $\mathtt{primary\;key}=p$ and $\mathtt{secondary\;key}=s$ into $\max(\mathtt{val},\mathtt{value})$.
\item $\mathtt{RMaxQ}(a,b,c,d)$: Return maximum $\mathtt{value}$, where $a\leq \mathtt{primary\;key} \leq b$ and \\$c\leq \mathtt{secondary\;key} \leq d$ (\emph{2D Range Maximum Query}).
\end{itemize}
Moreover, the search tree can be built in $O(n\log n)$ time, given the $n$ triplets $(\mathtt{primary\;key},$ $\mathtt{secondary\;key},$ $\mathtt{value})$ sorted first by primary key and then by secondary key. 
\end{lemma}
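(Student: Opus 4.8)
The plan is to build a standard two-level range tree, using the one-dimensional structure of Lemma~\ref{lemma:searchtree} as the secondary structure. First I would construct a balanced binary search tree $\mathcal{T}$ --- the \emph{primary tree} --- whose leaves store the $n$ points in increasing order of primary key; since the input is already sorted by primary key, this takes $O(n)$ time. Each node $v$ of $\mathcal{T}$ is associated with the contiguous range of primary keys spanned by the leaves in its subtree. At $v$ I would store a secondary one-dimensional tree $\mathcal{T}_v$ of the type guaranteed by Lemma~\ref{lemma:searchtree}, built over the secondary keys of exactly those points, with all values initialized to $-\infty$.

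Next I would describe the three operations. For $\mathtt{Update}(p,s,\mathtt{val})$ and $\mathtt{Upgrade}(p,s,\mathtt{val})$, the primary key $p$ lies on a unique root-to-leaf path of $\mathcal{T}$ touching $O(\log n)$ nodes, and these are exactly the nodes whose primary range contains $p$. At each such node $v$ I invoke the corresponding one-dimensional $\mathtt{Update}(s,\mathtt{val})$ or $\mathtt{Upgrade}(s,\mathtt{val})$ on $\mathcal{T}_v$; by Lemma~\ref{lemma:searchtree} each costs $O(\log n)$, for $O(\log^2 n)$ total. For $\mathtt{RMaxQ}(a,b,c,d)$ I decompose the primary interval $[a,b]$ into the $O(\log n)$ canonical nodes of $\mathcal{T}$ whose subtrees partition precisely the points with primary key in $[a,b]$, query each such node's secondary tree with the one-dimensional $\mathtt{RMaxQ}(c,d)$, and return the maximum of the returned values. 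Correctness follows because every point with primary key in $[a,b]$ is stored in exactly one canonical secondary tree, so the two-dimensional range is covered disjointly; each secondary query costs $O(\log n)$, giving $O(\log^2 n)$ overall.

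The remaining and only delicate point is the $O(n\log n)$ construction bound, since a naive build of the secondary trees could cost $O(n\log^2 n)$. The key observation is that Lemma~\ref{lemma:searchtree} builds a one-dimensional tree in \emph{linear} time provided the points are presented sorted by their (secondary) key. I would therefore build the secondary trees bottom-up: at each leaf the single point is trivially sorted by secondary key, and at each internal node I merge the two children's secondary-key-sorted lists in time linear in their combined size, then feed the merged list to the linear-time builder of Lemma~\ref{lemma:searchtree}. Since the lists at any fixed depth of $\mathcal{T}$ have total size $O(n)$ and $\mathcal{T}$ has depth $O(\log n)$, both the total merging cost and the total secondary-tree construction cost sum to $O(n\log n)$. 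The main obstacle is precisely this bookkeeping: obtaining the per-node secondary orderings for free from the bottom-up merge rather than re-sorting at each node; everything else is the routine canonical-decomposition argument for range trees.
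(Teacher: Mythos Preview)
Your proposal is correct and is precisely the standard two-level range tree construction that the paper invokes; the paper itself does not spell out the argument but simply states that both lemmas ``follow directly by maintaining maxima of values in each subtree for the corresponding standard range search structures'' and cites the textbook construction. Your write-up fills in exactly those details, including the bottom-up merge that yields the $O(n\log n)$ build time, so there is nothing to correct or contrast.
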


These lemmas follow directly by maintaining maxima of values in each subtree for the corresponding standard range search structures \cite{thegeometrybook} that support listing all the (key, value) pairs in a range. Such constructions are often used in sparse dynamic programming \cite{Epp92,SK03,MNU05}.

\begin{algorithm}
\KwIn{A set of interval pairs $A[1..N]$ with all interval endpoints being distinct positive integers.}
\KwOut{Array $C^+[1..N]$ containing the symmetric ordered coverage values.}
Initialize one-dimensional search trees $\mathcal{T}^\mathtt{a}$ and $\mathcal{T}^\mathtt{b}$ with keys $A[j].d$, $1 \leq j \leq N$, and with key $0$, all keys associated with values $-\infty$\;
$\mathcal{T}^\mathtt{a}.\mathtt{Upgrade}(0,0)$\; 
$E=\{(A[j].a,j) \mid 1\leq j \leq N\} \cup \{(A[j].b,j) \mid 1\leq j \leq N\}$\; 
$E.sort()$\; 
\For{$i \gets 1$ to $2N$}{
  $j=E[i][2]$\;
  $I = A[j]$\;
  \If{$I.a==E[i][1]$}{
      $C^\mathtt{a}[j] = \mathcal{T}^\mathtt{a}.\mathtt{RMaxQ}(0,I.c-1)$\;
      $C^\mathtt{b}[j] = I.c+ \mathcal{T}^\mathtt{b}.\mathtt{RMaxQ}(I.c,I.d)$\;
      $C[j] = \max(C^\mathtt{a}[j],C^\mathtt{b}[j])$\;
      $C^+[j] = C[j] + I.b-I.a+1$\;
  }
  \Else{
    $\mathcal{T}^\mathtt{a}.\mathtt{Upgrade}(I.d,C^+[j])$\;
    $\mathcal{T}^\mathtt{b}.\mathtt{Upgrade}(I.d,C[j]-I.c)$\;
  }  
}
\Return{$C^+[1..N]$}\;
\caption{\label{algo:colinearchainingOneSidedOverlaps}
Chaining allowing one-sided overlaps.}
\end{algorithm}

We obtain Algorithm~\ref{algo:colinearchainingOneSidedOverlaps} to handle the one-sided overlaps case, where we have replaced arrays $O$ and $D$ with $C^\mathtt{a}$ and $C^\mathtt{b}$, respectively, to reflect the cases shown in Fig.~\ref{fig:precedence}.

The pseudocode of Algorithm~\ref{algo:colinearchainingOneSidedOverlaps} assumes interval endpoints to be distinct. This assumption is only used for the ease of presentation. It can be relaxed by the standard method used in computational geometry: Replace each endpoint $x$ by a pair $(x,j)=E[i]$ where $A[j]$ identifies the anchor in question. These pairs $E[i]=(x,j)$ are distinct, and can be used as the keys of the search trees (in place of just $x$). Range queries can be implemented to ignore the secondary key $j$.   

\begin{lemma}
\label{lemma:colinearchaining}
Problem~\ref{prob:colinearchaining} on $N$ input pairs restricted to solutions that contain only one-sided overlaps can be solved in $O(N \log N)$ time, assuming the input satisfies Equal Match Length property.
\end{lemma}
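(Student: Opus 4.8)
The plan is to argue that Algorithm~\ref{algo:colinearchainingOneSidedOverlaps} correctly computes, for every $j$, the value $C[j]=\max(0,C^\mathtt{a}[j],C^\mathtt{b}[j])$ matching the simplified recurrence for $C[j]=\max(0,D[j],O[j])$ under the Equal Match Length property, and then to account for the running time via Lemma~\ref{lemma:searchtree}. The key structural idea is that the recurrence separates into two cases: the non-overlapping case, captured by $D[j]$ and handled by tree $\mathcal{T}^\mathtt{a}$ (the $C^\mathtt{a}$ value), and the one-sided-overlap case, captured by the rewritten $O[j]=A[j].c+\max C[j']-A[j'].c$ and handled by tree $\mathcal{T}^\mathtt{b}$ (the $C^\mathtt{b}$ value). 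The role of processing endpoints in sorted order is to enforce the precedence condition in the first dimension: when we reach the start point $I.a$ of anchor $I=A[j]$, exactly those anchors $A[j']$ whose relevant endpoint has already been inserted are the admissible predecessors, so the two search trees at that moment hold contributions only from candidate $j'$.

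First I would set up the correspondence between the array entries and the recurrence terms, and state the loop invariant: at the moment the start event of $A[j]$ is processed, for every $j'$ already finalized (i.e. whose \emph{end} event has passed), $\mathcal{T}^\mathtt{a}$ stores at key $A[j'].d$ the value $C^+[j']=C[j']+A[j'].b-A[j'].a+1$, and $\mathcal{T}^\mathtt{b}$ stores at key $A[j'].d$ the value $C[j']-A[j'].c$. Because events are sorted by position in the first dimension and each anchor contributes its end event strictly after its start event, an anchor $A[j']$ is present in the trees at the start of $A[j]$ precisely when $A[j'].b<A[j].a$, which in the one-sided-overlap setting is exactly the condition $A[j'].b<A[j].a$ (equivalently $\neg$ first-dimension overlap with $A[j']\prec A[j]$ once the second-dimension condition is also checked). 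I would then verify the two queries: the query $\mathcal{T}^\mathtt{a}.\mathtt{RMaxQ}(0,I.c-1)$ restricts to predecessors with $A[j'].d<A[j].c$, i.e. no second-dimension overlap, recovering $D[j]$; the query $I.c+\mathcal{T}^\mathtt{b}.\mathtt{RMaxQ}(I.c,I.d)$ restricts to $A[j].c\le A[j'].d\le A[j].d$, i.e. second-dimension overlap with $A[j']\prec A[j]$, recovering the rewritten $O[j]$. The initial $\mathtt{Upgrade}(0,0)$ into $\mathcal{T}^\mathtt{a}$ supplies the empty-chain option, ensuring $C[j]\ge 0$.

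The one subtlety I want to check carefully is that the second-dimension precedence ($A[j'].c<A[j].c$ and $A[j'].d<A[j].d$) is correctly enforced by the range bounds even though the trees are keyed only on the $d$-endpoint. For $D[j]$ the bound $A[j'].d\le I.c-1<A[j].c$ forces $A[j'].d<A[j].c\le A[j].d$ and, since anchors have equal match length and the first-dimension order already gives $A[j'].b<A[j].a$, the full precedence in the second dimension follows. For $O[j]$ the range $[I.c,I.d]$ together with the first-dimension precedence and equal match length similarly yields $A[j'].c<A[j].c$, so that $A[j']\prec A[j]$ holds and the overlap is genuinely one-sided (only in the second dimension). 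This is the step I expect to be the main obstacle, since it relies on combining the equal-match-length property with the sorted-order invariant to convert a query on a single endpoint into the full four-inequality precedence relation; I would make this explicit rather than leave it implicit.

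Finally I would bound the running time. There are $2N$ events, each processed once; the trees are built over at most $N+1$ keys in $O(N)$ time per Lemma~\ref{lemma:searchtree} (the keys $A[j].d$ and $0$, sorted), sorting $E$ costs $O(N\log N)$, and each iteration performs a constant number of $\mathtt{RMaxQ}$ and $\mathtt{Upgrade}$ operations, each $O(\log N)$ by Lemma~\ref{lemma:searchtree}. The distinct-endpoint assumption is discharged by the standard $(x,j)$-pair tie-breaking described after the algorithm, which does not affect correctness of the range bounds since queries ignore the secondary key. Summing gives the $O(N\log N)$ bound claimed, completing the proof of Lemma~\ref{lemma:colinearchaining}.
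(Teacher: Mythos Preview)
Your plan is sound up to the treatment of $C^\mathtt{b}[j]$, but the step you flag as ``the main obstacle'' is in fact a genuine gap: the implication you claim there is false. From $A[j'].b<A[j].a$ and $A[j].c\le A[j'].d\le A[j].d$ together with the Equal Match Length property you \emph{cannot} deduce $A[j'].c<A[j].c$. A concrete counterexample is $A[j']=([1..3],[5..7])$ and $A[j]=([10..20],[4..14])$: both satisfy Equal Match Length, $A[j'].b=3<10=A[j].a$, and $A[j'].d=7\in[4,14]$, yet $A[j'].c=5>4=A[j].c$, so $A[j']\prec A[j]$ fails (the second-dimension interval of $A[j']$ is nested inside that of $A[j]$). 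Thus $\mathcal{T}^\mathtt{b}$ can return a value coming from an anchor that does not precede $A[j]$, and your invariant-based argument does not establish that $C^\mathtt{b}[j]$ equals the rewritten $O[j]$.

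The paper's proof does not try to exclude such $A[j']$ from the query; instead it shows they are harmless to optimality. If the maximizer for $C^\mathtt{b}[j]$ is such a bad $A[j']$, its contribution is $C[j']+A[j].c-A[j'].c$ with $A[j].c-A[j'].c\le 0$. One then looks at the last anchor $A[j'']$ in an optimal chain to $A[j']$ that genuinely precedes and overlaps $A[j]$; using $A[j'']$ directly before $A[j]$ gives at least the same score, and if no such $A[j'']$ exists then the whole chain to $A[j']$ (minus $A[j']$) already lies in $\mathcal{T}^\mathtt{a}$, so $C^\mathtt{a}[j]\ge C^\mathtt{b}[j]$. This ``bad candidates never help'' argument is the missing ingredient; once you add it, the rest of your outline (the $C^\mathtt{a}$ case, the empty-chain sentinel, and the $O(N\log N)$ time bound) goes through as you wrote. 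As a minor note, your $D[j]$ case does not actually use Equal Match Length: $A[j'].d<A[j].c$ alone already forces $A[j'].c<A[j].c$ and $A[j'].d<A[j].d$.
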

\begin{proof}
The evaluation order of Algorithm~\ref{algo:colinearchainingOneSidedOverlaps} guarantees that when computing the values $C^\mathtt{a}[j]$ and $C^\mathtt{b}[j]$, the data structures contain only anchors that precede the current anchor and do not overlap it in the first dimension. The range query on $\mathcal{T}^\mathtt{a}$ guarantees that we also consider only those anchors that precede and do not overlap in the second dimension for the computation of $C^\mathtt{a}[j]$. The range query on $\mathcal{T}^\mathtt{b}$ guarantees that we also consider only those anchors that overlap in the second dimension for the computation of $C^\mathtt{b}[j]$, but this is not enough to guarantee predecessor-relation to hold. That is, there can be an anchor $I'$ stored in $\mathcal{T}^\mathtt{b}$ with $I'.b<I.a$ and $I.c\leq I'.c\leq I'.d< I.d$ and thus the evaluation order and range query fail to guarantee $I'.c< I.c$ to make $I'\prec I$ (recall the definition).
We need to show that if such $I'$ is in an optimal chain to $I$, there is always another optimal chain to $I$ not including $I'$. Consider the last anchor $A[j'']=I''$ in an optimal chain to $A[j']=I'$ that overlaps and precedes $I$. Then we know that $C[j]\geq C[j'']+I.c-I''.c$ and $C[j']\leq C[j'']+I'.c-I''.c$, so a chain where $I''$ directly precedes $I$ does not decrease the score. If such $I''$ does not exist but an optimal chain to $I$ includes $I'$, we have that $\max(C^\mathtt{a}[j],C^\mathtt{b}[j])=C^\mathtt{a}[j]$, as all anchors in an optimal chain to $I'$, excluding $I'$, are stored in $\mathcal{T}^\mathtt{a}$, and including $I'$ can only decrease the score as $I.c-I'.c\leq 0$.
\end{proof}

\subsection{Two-sided overlaps}

The trick by Shibuya and Kurochkin \cite{SK03} to handle two-sided overlaps is to separate them to two cases (see Fig.~\ref{fig:precedence}): (c) overlaps in the first dimension are at least as long as in the second dimension and (d) overlaps are longer in the second dimension. Since our algorithm so far considers all anchors that do not overlap in the first dimension, it will be enough to consider how to enhance the algorithm to handle anchors that do overlap in the first dimension. 

Consider case (c). That is, for any two pairs of anchors $I',I$, $I'\prec I$, it holds $I'.a< I.a \leq I'.b < I.b$, $I'.c< I.c$, $I'.d < I.d$ and $I'.d-I.c \leq I'.b-I.a$. The latter inequality can be written as $I.c-I.a \geq I'.c-I'.a$ (due to Equal Match Length property). Also, if $I'$ precedes $I$ in an optimal chain to $I$, the score calculated up to $I'$ will increase by inclusion of $I$ by $\min(I.a-I'.a,I.c-I'.c)=I.a-I'.a$ (due to Equal Match Length property). This means that once we first stop at anchor $I=A[j]$ in our algorithm, if we have inserted to a search tree $T^{\mathtt{c}}$ all anchors $A[j']$ that overlap $I$ in the first dimension, using keys $A[j'].c-A[j'].a$ and values $C[j']-A[j'].a$, we can query $T^{\mathtt{c}}.\mathtt{RMaxQ}(-\infty,A[j].c-A[j].a)$ and add $A[j].a$ to obtain the correct score for this case. However, in the order we evaluate the anchors we can only guarantee $A[j'].a < A[j].a \leq A[j'].b$ and thus $A[j'].c<A[j].c$ (property of case (c)), but not $A[j'].b < A[j].b$ or $A[j'].d < A[j].d$. To solve this, we add another dimension to the search tree, so we can add constraint $A[j'].b < A[j].b$ to the query, which also covers the remaining constraint $A[j'].d < A[j].d$ (property of case (c)).

Case (d) is almost symmetric to case (c): For any two pairs of anchors $I',I$, $I'\prec I$, it holds $I'.a< I.a \leq I'.b < I.b$, $I'.c< I.c$, $I'.d < I.d$ and $I'.d-I.c > I'.b-I.a$. The latter inequality can be written as $I'.c-I'.a>I.c-I.a$. Also, if $I'$ precedes $I$ in an optimal chain to $I$, the score calculated up to $I'$ will increase by inclusion of $I$ by $\min(I.a-I'.a,I.c-I'.c)=I.c-I'.c$ (due to Equal Match Length property). This means that once we first stop at anchor $I=A[j]$ in our algorithm, if we have inserted to a search tree $T^{\mathtt{d}}$ all anchors $A[j']$ that overlap $I$ in the first dimension, using keys $A[j'].c-A[j'].a$ and values $C[j']-A[j'].c$, we can query $T^{\mathtt{c}}.\mathtt{RMaxQ}(A[j].c-A[j].a+1,\infty)$ and add $A[j].c$ to obtain the correct score for this case. As before, we need to add another dimension to the search tree to handle constraint $A[j'].d < A[j].d$, which also covers constraint $A[j'].b < A[j].b$ (property of case (d)). We are left with constraints  $A[j'].a < A[j].a \leq A[j'].b$ and $A[j'].c<A[j].c$, where the first ones follow from the evaluation order, but now the latter is not automatically guaranteed to hold: Using arguments analogous to the proof of Lemma~\ref{lemma:colinearchaining}, we show that such nested case cannot change the optimal solution. 

The resulting enhancement to handle two-sided overlaps is given as Algorithm~\ref{algo:colinearchainingTwoSidedOverlaps}.

\begin{algorithm}
\KwIn{A set of interval pairs $A[1..N]$ with all interval endpoints being distinct positive integers.}
\KwOut{Array $C^+[1..N]$ containing the symmetric ordered coverage values.}
Initialize one-dimensional search trees $\mathcal{T}^\mathtt{a}$ and $\mathcal{T}^\mathtt{b}$ with keys $A[j].d$, $1 \leq j \leq N$, and with key $0$, all keys associated with values $-\infty$\;
Initialize two-dimensional search trees $\mathcal{T}^\mathtt{c}$ and $\mathcal{T}^\mathtt{d}$ with keys $(A[j].c-A[j].a,A[j].b)$ and $(A[j].c-A[j].a,A[j].d)$, respectively, for $1 \leq j \leq N$, associated with values $-\infty$\;
$\mathcal{T}^\mathtt{a}.\mathtt{Upgrade}(0,0)$\; 
$E=\{(A[j].a,j) \mid 1\leq j \leq N\} \cup \{(A[j].b,j) \mid 1\leq j \leq N\}$\; 
$E.sort()$\; 
\For{$i \gets 1$ to $2N$}{
  $j=E[i][2]$\;
  $I = A[j]$\;
  \If{$I.a==E[i][1]$}{
      $C^\mathtt{a}[j] = \mathcal{T}^\mathtt{a}.\mathtt{RMaxQ}(0,I.c-1)$\;
      $C^\mathtt{b}[j] = I.c+ \mathcal{T}^\mathtt{b}.\mathtt{RMaxQ}(I.c,I.d)$\;
      $C^\mathtt{c}[j] = I.a+ \mathcal{T}^\mathtt{c}.\mathtt{RMaxQ}(-\infty,I.c-I.a,0,I.b)$\;
      $C^\mathtt{d}[j] = I.c+ \mathcal{T}^\mathtt{d}.\mathtt{RMaxQ}(I.c-I.a+1,\infty,0,I.d)$\;      
      $C[j] = \max(C^\mathtt{a},C^\mathtt{b},C^\mathtt{c},C^\mathtt{d})$\;
      $C^+[j] = C[j]+I.b-I.a+1$\;
      $\mathcal{T}^\mathtt{c}.\mathtt{Upgrade}(I.c-I.a,I.b,C[j]-I.a)$\;
      $\mathcal{T}^\mathtt{d}.\mathtt{Upgrade}(I.c-I.a,I.b,C[j]-I.c)$\;
  }
  \Else{
    $\mathcal{T}^\mathtt{a}.\mathtt{Upgrade}(I.d,C^+[j])$\;
    $\mathcal{T}^\mathtt{b}.\mathtt{Upgrade}(I.d,C[j]-I.c)$\;
    $\mathcal{T}^\mathtt{c}.\mathtt{Update}(I.c-I.a,I.b,-\infty)$\;
    $\mathcal{T}^\mathtt{d}.\mathtt{Update}(I.c-I.a,I.b,-\infty)$\;
  }
}
\Return{$C^+[1..N]$}\;
\caption{\label{algo:colinearchainingTwoSidedOverlaps}
Chaining with two-sided overlaps.}
\end{algorithm}

The pseudocode of Algorithm~\ref{algo:colinearchainingTwoSidedOverlaps} assumes interval endpoints to be distinct, but this can be relaxed as in the proof of Lemma~\ref{lemma:colinearchaining}. Using the data structure from Lemmas~\ref{lemma:searchtree}~and~\ref{lemma:searchtree2d} we obtain the following result.

\begin{theorem}
\label{theorem:colinearchainingTwoSidedOverlaps}
Problem~\ref{prob:colinearchaining} on $N$ input pairs can be solved in  $O(N \log^2 N)$ time (by Algorithm~\ref{algo:colinearchainingTwoSidedOverlaps}), assuming the input satisfies Equal Match Length property.
\end{theorem}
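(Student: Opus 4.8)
The plan is to establish correctness by verifying that $C[j]=\max(C^\mathtt{a}[j],C^\mathtt{b}[j],C^\mathtt{c}[j],C^\mathtt{d}[j])$ evaluates the recurrence $C[j]=\max(0,D[j],O[j])$, and then to bound the running time using Lemmas~\ref{lemma:searchtree} and~\ref{lemma:searchtree2d}. First I would argue that the four index sets partition the predecessors $\{j':A[j']\prec A[j]\}$: the term $C^\mathtt{a}$ collects anchors with $\neg A[j']\cap A[j]$ (yielding $D[j]$, and, via the sentinel key $0$ of $\mathcal{T}^\mathtt{a}$, also the option of starting a fresh chain); $C^\mathtt{b}$ collects anchors overlapping only in the second dimension, i.e.\ with $A[j'].b<A[j].a$; and anchors overlapping in the first dimension are split by whether the first-dimension overlap is at least as long as the second ($A[j'].c-A[j'].a\leq A[j].c-A[j].a$, case (c), collected by $C^\mathtt{c}$) or strictly shorter (case (d), collected by $C^\mathtt{d}$). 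Since this covers $D[j]$ and $O[j]$ disjointly and exhaustively, taking the maximum yields $\max(0,D[j],O[j])$ once each term is shown correct.

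Next I would check the terms individually. The handling of $\mathcal{T}^\mathtt{a}$ and $\mathcal{T}^\mathtt{b}$ is unchanged from Algorithm~\ref{algo:colinearchainingOneSidedOverlaps}, and the end-point evaluation order still guarantees that these trees hold exactly the anchors with $A[j'].b<A[j].a$, so $C^\mathtt{a}$ and $C^\mathtt{b}$ inherit their correctness from Lemma~\ref{lemma:colinearchaining}. For $C^\mathtt{c}$ and $C^\mathtt{d}$ I would first confirm that, since each anchor is inserted into $\mathcal{T}^\mathtt{c},\mathcal{T}^\mathtt{d}$ at its own start event and removed at its end event, these trees hold exactly the anchors overlapping $A[j]$ in the first dimension when we stop at $A[j].a$. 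Then I would verify that the two-dimensional query ranges enforce the remaining precedence constraints: in case (c) the primary range together with $A[j'].a<A[j].a$ gives $A[j'].c<A[j].c$, while the secondary range gives $A[j'].b<A[j].b$ and hence $A[j'].d<A[j].d$; in case (d) the secondary range gives $A[j'].d<A[j].d$ and hence $A[j'].b<A[j].b$. Finally, using the Equal Match Length property to resolve the $\min$, the value $C[j']-A[j'].a$ stored in $\mathcal{T}^\mathtt{c}$ plus the added $A[j].a$ realizes the increment $A[j].a-A[j'].a$, and symmetrically $C[j']-A[j'].c$ plus $A[j].c$ realizes $A[j].c-A[j'].c$ in $\mathcal{T}^\mathtt{d}$.

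The step I expect to be the main obstacle is the case-(d) query, where $A[j'].c<A[j].c$ is \emph{not} guaranteed, so an over-included anchor may be nested inside $A[j]$ rather than preceding it (the same gap also arises for $C^\mathtt{b}$ and is already absorbed inside Lemma~\ref{lemma:colinearchaining}). Here I would mirror the exchange argument of Lemma~\ref{lemma:colinearchaining}: if a nested $A[j']$ occurs in an optimal chain to $A[j]$, take the last anchor $A[j'']$ in that chain that genuinely precedes and overlaps $A[j]$, and argue that routing the chain directly from $A[j'']$ to $A[j]$ does not decrease the score, since the detour through the nested $A[j']$ can only lose value as $A[j].c-A[j'].c\leq 0$; and if no such $A[j'']$ exists, the $C^\mathtt{a}$ term already dominates. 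This shows the relaxed query never produces a spurious maximum exceeding the score of a legitimate chain, so correctness with respect to Problem~\ref{prob:colinearchaining} follows.

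For the running time I would simply tally the work: sorting $E$ costs $O(N\log N)$; building the two one-dimensional trees costs $O(N)$ and the two two-dimensional trees $O(N\log N)$ each by Lemmas~\ref{lemma:searchtree} and~\ref{lemma:searchtree2d}. The main loop runs $2N$ iterations, each performing a constant number of one-dimensional operations in $O(\log N)$ and two-dimensional operations in $O(\log^2 N)$, for $O(N\log^2 N)$ overall, which dominates and yields the claimed bound.
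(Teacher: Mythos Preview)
Your proposal is correct and follows essentially the same route as the paper: you partition the predecessors into the four cases, verify that $\mathcal{T}^\mathtt{a},\mathcal{T}^\mathtt{b}$ inherit correctness from Lemma~\ref{lemma:colinearchaining}, check that the two-dimensional queries for $\mathcal{T}^\mathtt{c},\mathcal{T}^\mathtt{d}$ enforce all precedence constraints except $A[j'].c<A[j].c$ in case~(d), and dispatch that residual nested case with an exchange argument through the last genuinely preceding overlapping anchor $A[j'']$; the running-time tally is identical.

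One refinement worth flagging: when you ``mirror the exchange argument of Lemma~\ref{lemma:colinearchaining}'' for case~(d), the intermediate anchor $A[j'']$ may overlap $A[j]$ with the \emph{first}-dimension overlap dominating, which is a sub-case Lemma~\ref{lemma:colinearchaining} never had to face (there all stored anchors satisfied $A[j''].b<A[j].a$). The paper's proof handles this explicitly: one shows $C[j']=C[j'']+A[j'].a-A[j''].a$ by transitivity of the first-dimension overlap, and then uses $A[j].a-A[j'].a>0\geq A[j].c-A[j'].c$ to conclude that going directly $A[j'']\to A[j]$ is no worse. Your sketch's ``can only lose value as $A[j].c-A[j'].c\leq 0$'' is the right intuition, but when you write it out make sure this first-dimension sub-case is covered alongside the second-dimension sub-case already handled inside Lemma~\ref{lemma:colinearchaining}.
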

\begin{proof}
As discussed earlier, it is sufficient to consider anchors $A[j']$ and $A[j]$ that satisfy the precedence and overlap relations except for $A[j'].c < A[j].c$ not holding, as all other constraints are properly covered by the combination of evaluation order and the queries. Such invalid anchors $A[j']$ can affect the query results from data structures $\mathcal{T}^\mathtt{b}$ and $\mathcal{T}^\mathtt{d}$ when computing the score for $A[j]$. Consider that an optimal chain to $A[j]$ has $A[j']$ as the previous anchor and thus $C[j]=C[j']+A[j].c-A[j'].c$. Consider the last anchor $A[j'']$ in an optimal chain to $A[j']$ that precedes and overlaps $A[j]$. Assume the overlap is larger in the first dimension (the other case is considered already in the proof of Lemma~\ref{lemma:colinearchaining}). Then $C[j']= C[j'']+A[j].a-A[j''].a$ as $A[j']$ must overlap $A[j]$ in the first dimension, $A[j'']$ must directly precede $A[j']$ for it being the last with this property, and the overlap between $A[j'']$ and $A[j']$ is larger in the first dimension due to transitivity. As $A[j].a-A[j'].a\geq A[j].c-A[j'].c$, the direct use of $A[j'']$ before $A[j]$ gives $C[j]\geq C[j'']+A[j].a-A[j''].a=C[j'']+A[j].a-A[j'].a+A[j'].a-A[j''].a\geq C[j']+A[j].c-A[j'].c$. That is, $A[j']$ can be omitted from the optimal path.  
\end{proof}

\subsection{Overlaps with weak precedence}

Let us now proceed to improve the running time of Algorithm~\ref{algo:colinearchainingTwoSidedOverlaps} to $O(N \log N)$ by considering chains under the weak precedence relation (Problem~\ref{prob:colinearchainingweak}). For this, we drop the second dimension of the data structures $\mathcal{T}^\mathtt{c}$ and $\mathcal{T}^\mathtt{d}$, that were added to guarantee strict precedence. However, this is not sufficient for proving correctness as we used these constraints to indirectly guarantee precedence of start positions of anchors as well. Case (c) causes no problems, as the evaluation order guarantees that $\mathcal{T}^\mathtt{c}$ contains anchors $A[j']$ with $A[j'].a<A[j].a$ and the query restricts to cases $A[j'].c<A[j].c$. However, in case (d) the solution returned can have $A[j].c\leq A[j'].c$. We will consider this in the proof of the next theorem.

\begin{theorem}
\label{theorem:colinearchainingWeak}
Problem~\ref{prob:colinearchainingweak} on $N$ input pairs can be solved in  $O(N \log N)$ time (by Algorithm~\ref{algo:colinearchainingTwoSidedOverlaps} with the operations on the second dimension of search trees $\mathcal{T}^\mathtt{c}$ and $\mathcal{T}^\mathtt{d}$ omitted), assuming the input satisfies Equal Match Length property.
\end{theorem}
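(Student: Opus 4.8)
The plan is to split the statement into its running-time claim, which is immediate, and its correctness claim, which is where the work lies. For the running time, once the second coordinate is removed from $\mathcal{T}^\mathtt{c}$ and $\mathcal{T}^\mathtt{d}$ all four search trees are one-dimensional, so by Lemma~\ref{lemma:searchtree} every $\mathtt{Update}$, $\mathtt{Upgrade}$ and $\mathtt{RMaxQ}$ costs $O(\log N)$; since the main loop issues $O(N)$ such operations and the only other cost is sorting $E$, the total is $O(N\log N)$.

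For correctness I would prove by induction on the evaluation order that $C[j]$ equals $W[j]$, the optimal weakly ordered coverage of a weak chain ending at $A[j]$ with the length of $A[j]$ itself not yet added; the returned $C^+[j]$ then solves Problem~\ref{prob:colinearchainingweak}. I would separate this into completeness ($C[j]\ge W[j]$) and soundness ($C[j]\le W[j]$). For completeness I would check that every valid weak predecessor $A[j']$ of $A[j]$ is inspected in exactly one of the four cases with the correct coverage increment. The non-overlapping and one-sided cases are handled by $\mathcal{T}^\mathtt{a}$ and $\mathcal{T}^\mathtt{b}$ exactly as in Lemma~\ref{lemma:colinearchaining}. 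For the first-dimension-overlap cases I would verify, using the Equal Match Length property, that the comparison between $A[j'].c-A[j'].a$ and $A[j].c-A[j].a$ routes $A[j']$ into $\mathcal{T}^\mathtt{c}$ (when the former is the smaller) or $\mathcal{T}^\mathtt{d}$ (when it is larger) precisely so that the stored value plus $A[j].a$, respectively $A[j].c$, equals $\min(A[j].a-A[j'].a,A[j].c-A[j'].c)$, the true increment. The point needing care here is that this identity must persist even when one interval of $A[j]$ is nested inside the corresponding interval of $A[j']$, a configuration that $\prec^w$ now permits but $\prec$ excluded; I would also confirm that the degenerate combination of first-dimension overlap with no second-dimension overlap can never land in the $\mathcal{T}^\mathtt{d}$ branch (the two conditions are contradictory under Equal Match Length), so that no valid predecessor is under-counted.

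The soundness direction is the heart of the argument and the main obstacle. As observed before the theorem, the only way an inspected candidate $A[j']$ can fail weak precedence is $A[j'].c\ge A[j].c$, and by the key bound $A[j'].c-A[j'].a\le A[j].c-A[j].a$ this cannot occur in $\mathcal{T}^\mathtt{c}$; hence spurious candidates arise only in $\mathcal{T}^\mathtt{b}$ and $\mathcal{T}^\mathtt{d}$, and in both they are characterised by exactly the inequality $A[j'].c\ge A[j].c$ already treated in Lemma~\ref{lemma:colinearchaining} and Theorem~\ref{theorem:colinearchainingTwoSidedOverlaps}. The key observation I would stress is that those rerouting arguments never used the strict constraints $A[j'].b<A[j].b$ or $A[j'].d<A[j].d$ that we have just dropped: given an optimal weak chain to $A[j']$ realising $C[j']$, one takes the last anchor $A[j'']$ in it that weakly precedes and overlaps $A[j]$ in the first dimension and shows, via $A[j].a-A[j'].a\ge A[j].c-A[j'].c$ and the Equal Match Length property, that routing $A[j'']$ directly before $A[j]$ yields a genuine weak chain of score at least $C[j']+A[j].c-A[j'].c$; and when no such $A[j'']$ exists the chain to $A[j']$ lies entirely in $\mathcal{T}^\mathtt{a}$, so the non-positive increment $A[j].c-A[j'].c\le 0$ makes $C^\mathtt{a}[j]$ already dominate, exactly as in Lemma~\ref{lemma:colinearchaining}.

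The delicate step I expect to spend the most care on is verifying that the rerouting target $A[j'']$ is itself a valid weak predecessor of $A[j]$, i.e. $A[j''].c<A[j].c$: this is automatic when $A[j'']$ relates to $A[j]$ through the case-(c) regime, by the same key argument that makes $\mathcal{T}^\mathtt{c}$ free of spurious candidates, and the complementary regime reduces to the $\mathcal{T}^\mathtt{b}$ situation already settled in Lemma~\ref{lemma:colinearchaining}. Assembling the two directions gives $C[j]=W[j]$ for every $j$, and together with the $O(N\log N)$ bound this establishes the theorem.
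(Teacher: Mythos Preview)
Your plan correctly isolates the running-time argument and the completeness direction, and it is right that the only spurious candidates occur in $\mathcal{T}^\mathtt{b}$ and $\mathcal{T}^\mathtt{d}$ via the inequality $A[j'].c\ge A[j].c$. The gap is in the soundness argument for $\mathcal{T}^\mathtt{d}$. You assert that the rerouting from Lemma~\ref{lemma:colinearchaining} and Theorem~\ref{theorem:colinearchainingTwoSidedOverlaps} carries over because it ``never used the strict constraints $A[j'].b<A[j].b$ or $A[j'].d<A[j].d$'', but this is not so: in Theorem~\ref{theorem:colinearchainingTwoSidedOverlaps} the spurious $A[j']$ still satisfied $A[j'].d<A[j].d$ (the second coordinate of $\mathcal{T}^\mathtt{d}$ enforced it), so its second interval was \emph{nested} in that of $A[j]$. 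Dropping that coordinate now permits the genuinely new configuration $A[j'].c\ge A[j].c$ \emph{and} $A[j'].d>A[j].d$, where the second interval of $A[j]$ lies to the left of (or inside) that of $A[j']$; this is exactly the situation depicted in Figure~\ref{fig:casebproof}. Your proposed rerouting target---the last $A[j'']$ in the chain to $A[j']$ that weakly precedes $A[j]$ and overlaps it in the \emph{first} dimension---need not exist in this configuration, and your fallback ``the chain to $A[j']$ lies entirely in $\mathcal{T}^\mathtt{a}$'' is false: earlier anchors of that chain may overlap $A[j]$ only in the second dimension and hence be visible to $\mathcal{T}^\mathtt{b}$, not $\mathcal{T}^\mathtt{a}$.

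The paper's proof handles this new configuration with a different rerouting: it looks for an anchor $A[j'']$ in the optimal chain to $A[j']$ that overlaps the single position $A[j].c$ in the \emph{second} dimension, bounds $C[j']\le C[j'']+A[j'].c-A[j''].c$ by a telescoping argument over the $.c$ coordinates, and then---when no such $A[j'']$ exists---takes the last anchor not overlapping $A[j].c$ and splits into three further sub-cases (non-overlap with $A[j']$, first-dimension overlap with $A[j']$, empty chain). To repair your plan you would need to insert this second-dimension rerouting and its case analysis; the first-dimension rerouting you describe covers only the nested sub-case $A[j'].d\le A[j].d$ already treated in Theorem~\ref{theorem:colinearchainingTwoSidedOverlaps}.
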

\begin{proof}
As discussed, it is sufficient to show that queries from $\mathcal{T}^\mathtt{d}$ correspond to proper solutions. For contradiction, assume that $C[j]=C^\mathtt{d}[j]$,  $C^\mathtt{d}[j]>\max(C^\mathtt{a}[j],C^\mathtt{b}[j],C^\mathtt{c}[j])$, and $C[j]=C[j']+A[j].c-A[j'].c$ only for $A[j']$s for which $A[j].c\leq A[j'].c$. 
Such solution is not proper (weak precedence not holding), so we need to show that there is an equivalently good proper solution. 

First, if it also holds $A[j'].d\leq A[j].d$, we have the nested case handled already in the proof of Theorem~\ref{theorem:colinearchainingTwoSidedOverlaps}. We continue with the case case where $A[j].c\leq A[j'].c$ and $A[j].d<A[j'].d$ hold. This setting is illustrated in Fig.~\ref{fig:casebproof}.

Consider an anchor $A[j'']$ in an optimal chain to $A[j']$ that overlaps position $A[j].c$ in the second dimension. It holds $C[j']\leq C[j'']+A[j'].c-A[j''].c$, since any chain from $A[j'']$ to $A[j']$ can cover at most $A[j'].c-A[j''].c$ positions. But then there is an optimal chain to $A[j]$ avoiding $A[j']$ with score $C[j]\geq C[j'']+A[j].c-A[j''].c=C[j'].c-A[j'].c+A[j''].c+A[j].c-A[j''].c=C[j']+A[j].c-A[j'].c$, which is a contradiction. 

We are left with the case that there is no such $A[j'']$ in the optimal chain to $A[j']$ that overlaps position $A[j].c$ in the second dimension. Let then $A[j'']$ be the last anchor in an optimal chain to $A[j']$ that does not overlap $A[j].c$. We have three cases to consider: a) $\neg A[j''] \cap A[j']$, b) $A[j''] \cap A[j']$, and c) no such $A[j'']$ exist.

In case a) $C[j']< C^+[j'']+A[j'].c-A[j].c$, assuming that only $A[j].c$ is left uncovered between anchors $A[j'']$ and $A[j']$. Using this we can write $C[j]=C[j']+A[j].c-A[j'].c< C^+[j'']+A[j'].c-A[j].c +A[j].c-A[j'].c= C^+[j'']$. Since it holds $A[j''] \prec^w A[j]$ and $\neg A[j''] \cap A[j]$ we have that $C[j]\geq C^+[j'']$ using directly $A[j'']$ avoiding $A[j']$. This is contradiction.

In case b) $A[j'']$ can only overlap $A[j']$ in the first dimension. Then $C[j']=C[j'']+A[j'].a-A[j''].a$. If $A[j'']$ also overlaps $A[j]$, it can do so only in the first dimension. Then $C[j]\geq C[j'']+A[j].a-A[j''].a\geq C[j'']+A[j'].a-A[j'].a$. That is, $A[j']$ can be avoided in an optimal chain to $A[j]$ by using $A[j'']$ instead. On the other hand, if $A[j'']$ does not overlap $A[j]$, we have $C[j]\geq C^+[j'']\geq C[j'']+A[j'].a-A[j''].a=C[j']$, in which case we also get a contradiction.

In case c) $C[j']=0$ and thus $C[j]=C[j']+A[j].c-A[j'].c\leq 0$, which contradicts our assumption $C^\mathtt{d}[j]>C^\mathtt{a}[j]=0$. \end{proof}

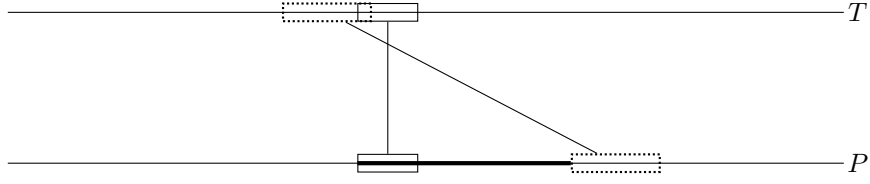
\begin{figure}
        \centering
        \begin{tikzpicture}
        \draw (0,2) to (11,2) node {\verb+  +$T$};
        \draw (0,0) to (11,0) node {\verb+  +$P$};
        \draw[densely dotted,thick] (4.2,2) node[rectangle,draw](Iptop) {\verb+      +};
        \draw (5,2) node[rectangle,draw](Itop) {\verb+    +};
        \draw[densely dotted,thick] (8,0) node[rectangle,draw](Ipbottom) {\verb+      +};
        \draw (5,0) node[rectangle,draw](Ibottom) {\verb+    +};
        \draw[-] (Iptop) to (Ipbottom);
        \draw[-] (Itop) to (Ibottom);
        \draw[ultra thick] (4.6,0) to (Ipbottom);
        \end{tikzpicture}        
    \caption{Dotted and solid rectangles denote $A[j']$ and $A[j]$, respectively. Here the (weak) precedence is not holding as the interval of $A[j']$ in the second dimension succeeds the corresponding interval of $A[j]$. The thick line segment represents maximum coverage a chain ending at $A[j']$ (not including $A[j']$) can achieve after starting from $A[j]$. The algorithm subtracts this thick line segment length from the score, so that there is at least as good chain to $A[j]$ that avoids using $A[j']$. \label{fig:casebproof}}
\end{figure}

\section{Connection to LCS\label{sect:lcs}}

String $C$ is a \emph{Longest Common Subsequence} (LCS) of strings $T$ and $P$ if it is a longest string that can be obtained by deleting $0$ or more characters from both $T$ and from $P$. Such $C[1..\ell]$ can be written as $T':=T[i_1]T[i_2]\cdots T[i_\ell]$ and as $P':=P[j_1]P[j_2]\cdots P[j_\ell]$, where $1\leq i_1 < i_2 < \cdots < i_\ell \leq |T|$ and $1\leq j_1 < j_2 < \cdots < j_\ell \leq |P|$. Consider the set of anchors $A$ being exact matches between $T$ and $P$. We say that $C$ is an \emph{anchor-restricted LCS} if
it can be written as $T'$ and as $P'$ defined above such that for each $(i_k,j_k)$ there is an anchor $([a\ltdots b],[c\ltdots d])$ in $A$ with $a+x=i_k$ and $c+x=j_k$ for some $x$, $0\leq x \leq b-a=d-c$. Informally, such $C$ is a longest string with all characters appearing in increasing order in $T$ and $P$ where each such occurrence of a character is \emph{supported} by at least one anchor. We show that an \emph{anchor-restricted LCS} can be found by solving the problem of chaining under the weak precedence:

\begin{theorem}
Assume the anchors $A$ are exact matches between input strings $T$ and $P$. The score of a chain $S$ such that $\mathtt{coverage}(S)=\max_{1 \leq i \leq N} \mathtt{coverage}(S^i)$ of Problem~\ref{prob:colinearchainingweak} equals the length of an \emph{anchor-restricted LCS} of $T$ and $P$.
\label{thm:lcs}
\end{theorem}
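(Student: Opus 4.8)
The plan is to prove the two inequalities separately, reading an anchor-restricted LCS as a longest sequence of match points $(i,j)$ with $T[i]=P[j]$ that is strictly increasing in both coordinates and whose every point lies on some anchor. Because the anchors are exact matches satisfying the Equal Match Length property, an anchor $([a\ltdots b],[c\ltdots d])$ contributes precisely the diagonal points $(a+x,c+x)$ for $0\le x\le b-a=d-c$, and any two points lying on a common anchor share the diagonal $i-j$.

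First I would establish that $\max_i \mathtt{coverage}(S^i)$ is at most the anchor-restricted LCS length. Given any weak chain $S[1\ltdots n]$ I read its coverage off as the size of an explicit increasing sequence of anchor-supported match points. For $j<n$ let $t_j$ denote the $j$-th summand of $\mathtt{coverage}(S)$; from $S[j]$ select its first $t_j$ diagonal points $(S[j].a+x,\,S[j].c+x)$ with $0\le x<t_j$, and from $S[n]$ select all $S[n].b-S[n].a+1$ of its points. Using Equal Match Length one checks that $t_j$ equals the number of diagonal points of $S[j]$ that are strictly dominated in both coordinates by the start $(S[j+1].a,S[j+1].c)$ of $S[j+1]$, capped by the length of $S[j]$; this is exactly $\min\bigl(S[j+1].a-S[j].a,\,S[j+1].c-S[j].c,\,S[j].b-S[j].a+1\bigr)$. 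Each selected point is matching and anchor-supported by the anchor producing it, consecutive selected points within one anchor are strictly increasing, and the last selected point of $S[j]$ (offset $t_j-1$) is strictly dominated by the offset-$0$ point of $S[j+1]$ by the very choice of $t_j$. Hence the concatenation is a valid anchor-restricted common subsequence of length exactly $\mathtt{coverage}(S)$, so taking the maximum over weak chains yields the desired inequality.

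Conversely, to show the anchor-restricted LCS length is at most $\max_i \mathtt{coverage}(S^i)$, I would start from an optimal anchor-restricted LCS $p_1\prec\cdots\prec p_\ell$, assign to each $p_k$ a supporting anchor, coalesce maximal runs of consecutive points that a single anchor covers as a contiguous diagonal segment, and pick one anchor per run to form a candidate chain $S[1\ltdots n]$. Reading the characterization of the previous paragraph in reverse, it then suffices to verify (i) that consecutive chain anchors satisfy weak precedence and (ii) that the coverage summands recover every $p_k$. Part (ii) follows from optimality of the LCS: no selectable diagonal point lying strictly between two consecutive selected points on a common anchor can be omitted, for otherwise inserting it would lengthen the LCS; thus each run is exactly the truncated prefix of diagonal points of $S[t]$ that is dominated by the start of $S[t+1]$, together with all points of $S[n]$, and these are exactly the points counted by $\mathtt{coverage}(S)$.

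I expect the main obstacle to be part (i), weak precedence of consecutive chain anchors, since supporting anchors of increasing points need not have increasing start coordinates: a later run may sit on an anchor whose start precedes, in one dimension, the anchor of the previous run (exactly the nested configurations discussed around Lemma~\ref{lemma:weakisenough}). I plan to dispatch this by the same exchange reasoning used in Lemma~\ref{lemma:weakisenough} and in Theorems~\ref{theorem:colinearchainingTwoSidedOverlaps} and~\ref{theorem:colinearchainingWeak}: whenever $S[t]$ and $S[t+1]$ violate weak precedence, the portion of $S[t+1]$'s anchor lying before $p_{k+1}$ already dominates the contribution of $S[t]$'s run, so dropping or replacing the offending anchor produces an equally long anchor-supported increasing sequence realizable by a proper weak chain; optimality of the LCS guarantees such configurations never strictly help, so an optimal LCS is always realizable by a valid weak chain of equal coverage. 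Combining the two inequalities gives the claimed equality.
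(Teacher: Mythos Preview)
Your overall architecture matches the paper's: one inequality by extracting an increasing sequence of anchor-supported match points from an optimal chain, the other by starting from an optimal anchor-restricted LCS, grouping its points by supporting anchor, and then repairing precedence violations by the same absorption/exchange argument used around Lemma~\ref{lemma:weakisenough}. Your first direction is in fact cleaner than the paper's, which detours through Lemma~\ref{lemma:weakisenough} to strict precedence before extracting the common subsequence; your direct computation that the $j$-th summand equals $\min(S[j{+}1].a-S[j].a,\,S[j{+}1].c-S[j].c,\,S[j].b-S[j].a+1)$ under Equal Match Length is exactly what is needed.

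There is, however, a genuine gap in your step~(ii). From ``no selectable diagonal point strictly between two consecutive selected points can be omitted'' you correctly get that each run is contiguous, but you then conclude that the run on $S[t]$ is \emph{exactly} the prefix of $S[t]$'s diagonal dominated by $(S[t{+}1].a,S[t{+}1].c)$. That does not follow: the run need not start at $(S[t].a,S[t].c)$, and it can extend past the truncation point. Concretely, take $S[1]=([1\ltdots 5],[1\ltdots 5])$ and $S[2]=([3\ltdots 10],[6\ltdots 13])$; an optimal anchor-restricted LCS is $(1,1),\dots,(5,5),(6,9),\dots,(10,13)$, so run~$1$ has five points while the prefix of $S[1]$ dominated by $(3,6)$ has only two, and run~$2$ starts at $(6,9)$, not at $(S[2].a,S[2].c)=(3,6)$. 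Note that $\mathtt{coverage}(S[1],S[2])=2+8=10$ still equals the LCS length, so your \emph{conclusion} $\mathtt{coverage}(S)\ge\ell$ is not contradicted---only your per-anchor accounting is wrong. You therefore need a different argument that the summands of $\mathtt{coverage}(S)$ globally dominate the run lengths.

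The paper sidesteps this per-anchor bookkeeping: it keeps a separate ``score'' (number of match points carried by the current chain), and each time weak precedence fails between $I'$ and $I$ it deletes $I'$ and \emph{reassigns} the $x$ match points of $I'$ to $I$, arguing that $I$'s diagonal has room for them before $I$'s own first match. The score never drops during this process, and the process terminates with a valid weak chain. To finish your route cleanly, you should either mimic this reassignment (so that after all repairs the matches on each $S[t]$ really do form a prefix, making your accounting valid), or prove directly the lemma you are implicitly using: for any weak chain $S$, $\mathtt{coverage}(S)$ equals the maximum length of an increasing sequence of points each lying on some anchor of $S$. You flagged~(i) as the main obstacle; in fact~(ii) is where the real work hides.
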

\begin{proof}
Due to Lemma~\ref{lemma:weakisenough}, we can assume $S$ is a chain under the strict precedence order. Each anchor in $S$ contributes to the score by the minimum length of its intervals after the overlaps with the previous anchor intervals have been cut out. This minimum length equals the number of characters that can be included to the common subsequence. That is, we can extract an anchor-restricted subsequence of $T$ and $P$ of length $\mathtt{coverage}(S)$ from the solution. We need to show that such subsequence is the longest among anchor-restricted subsequences. Assume, for contradiction, that there is an anchor-restricted LCS $C[1..\ell]$ longer than $\mathtt{coverage}(S)$. Consider the chain of $\ell$ anchors formed by taking for each $C[k]$ an anchor containing match $T[i_k]=P[j_k]$. Assign a score $1$ to each anchor included in the chain. Let us modify this chain into a chain where weak precedence holds such that the total score (number of matches induced by the solution) remains the same. First, we merge from left to right all runs of identical anchors; score of an anchor is then the length of the run in the original chain. Then we consider anchors from left to right. Consider the first pair of anchors $I',I$ in the current chain for which $I.a\leq I'.a$ (case $I.c\leq I'.c$ is symmetric). Let the score of $I'$ be $x$. By construction, we know that the left-most position possible for the first match of $I$ included in $C$ is $I'.a+x$. Therefore, we can remove $I'$ from the chain and include $x$ matches from $I$. The total score does no decrease by this change. This process can be repeated until the weak precedence relation holds up to $I$, and then continued similarly to the end of the chain, yielding a contradiction. 
\end{proof}

\bibliography{document}

\begin{thebibliography}{10}

\bibitem{AO03}
Mohamed~Ibrahim Abouelhoda and Enno Ohlebusch.
\newblock Multiple genome alignment: Chaining algorithms revisited.
\newblock In Ricardo~A. Baeza{-}Yates, Edgar Ch{\'{a}}vez, and Maxime
  Crochemore, editors, {\em Combinatorial Pattern Matching, 14th Annual
  Symposium, {CPM} 2003, Morelia, Michoc{\'{a}}n, Mexico, June 25-27, 2003,
  Proceedings}, volume 2676 of {\em Lecture Notes in Computer Science}, pages
  1--16. Springer, 2003.
\newblock \href {https://doi.org/10.1007/3-540-44888-8\_1}
  {\path{doi:10.1007/3-540-44888-8\_1}}.

\bibitem{AO05}
Mohamed~Ibrahim Abouelhoda and Enno Ohlebusch.
\newblock Chaining algorithms for multiple genome comparison.
\newblock {\em J. Discrete Algorithms}, 3(2-4):321--341, 2005.
\newblock \href {https://doi.org/10.1016/j.jda.2004.08.011}
  {\path{doi:10.1016/j.jda.2004.08.011}}.

\bibitem{BI15}
Arturs Backurs and Piotr Indyk.
\newblock Edit distance cannot be computed in strongly subquadratic time
  (unless {SETH} is false).
\newblock In Rocco~A. Servedio and Ronitt Rubinfeld, editors, {\em Proceedings
  of the Forty-Seventh Annual {ACM} on Symposium on Theory of Computing, {STOC}
  2015, Portland, OR, USA, June 14-17, 2015}, pages 51--58. {ACM}, 2015.
\newblock \href {https://doi.org/10.1145/2746539.2746612}
  {\path{doi:10.1145/2746539.2746612}}.

\bibitem{thegeometrybook}
Mark de~Berg, Otfried Cheong, Marc~J. van Kreveld, and Mark~H. Overmars.
\newblock {\em Computational geometry: algorithms and applications, 3rd
  Edition}.
\newblock Springer, 2008.
\newblock URL: \url{http://www.worldcat.org/oclc/227584184}.

\bibitem{Epp92}
David Eppstein, Zvi Galil, Raffaele Giancarlo, and Giuseppe~F. Italiano.
\newblock Sparse dynamic programming {I:} linear cost functions.
\newblock {\em J. {ACM}}, 39(3):519--545, 1992.
\newblock \href {https://doi.org/10.1145/146637.146650}
  {\path{doi:10.1145/146637.146650}}.

\bibitem{FMW97}
Stefan Felsner, Rudolf M{\"{u}}ller, and Lorenz Wernisch.
\newblock Trapezoid graphs and generalizations, geometry and algorithms.
\newblock {\em Discrete Applied Mathematics}, 74(1):13--32, 1997.
\newblock \href {https://doi.org/10.1016/S0166-218X(96)00013-3}
  {\path{doi:10.1016/S0166-218X(96)00013-3}}.

\bibitem{Gus97}
Dan Gusfield.
\newblock {\em Algorithms on Strings, Trees, and Sequences - Computer Science
  and Computational Biology}.
\newblock Cambridge University Press, 1997.
\newblock \href {https://doi.org/10.1017/cbo9780511574931}
  {\path{doi:10.1017/cbo9780511574931}}.

\bibitem{MNU05}
Veli M{\"{a}}kinen, Gonzalo Navarro, and Esko Ukkonen.
\newblock Transposition invariant string matching.
\newblock {\em J. Algorithms}, 56(2):124--153, 2005.

\bibitem{MSY12}
Veli M{\"{a}}kinen, Leena Salmela, and Johannes Ylinen.
\newblock Normalized {N50} assembly metric using gap-restricted co-linear
  chaining.
\newblock {\em {BMC} Bioinformatics}, 13:255, 2012.
\newblock \href {https://doi.org/10.1186/1471-2105-13-255}
  {\path{doi:10.1186/1471-2105-13-255}}.

\bibitem{Maketal19}
Veli M{\"{a}}kinen, Alexandru~I. Tomescu, Anna Kuosmanen, Topi Paavilainen,
  Travis Gagie, and Rayan Chikhi.
\newblock Sparse dynamic programming on {DAG}s with small width.
\newblock {\em {ACM} Trans. Algorithms}, 15(2):29:1--29:21, 2019.
\newblock \href {https://doi.org/10.1145/3301312} {\path{doi:10.1145/3301312}}.

\bibitem{MM95}
Gene Myers and Webb Miller.
\newblock Chaining multiple-alignment fragments in sub-quadratic time.
\newblock In Kenneth~L. Clarkson, editor, {\em Proceedings of the Sixth Annual
  {ACM-SIAM} Symposium on Discrete Algorithms, 22-24 January 1995. San
  Francisco, California, {USA.}}, pages 38--47. {ACM/SIAM}, 1995.
\newblock URL: \url{http://dl.acm.org/citation.cfm?id=313651.313661}.

\bibitem{SK03}
Tetsuo Shibuya and Igor Kurochkin.
\newblock {Match Chaining Algorithms for cDNA Mapping}.
\newblock In Gary Benson and Roderic D.~M. Page, editors, {\em {Algorithms in
  Bioinformatics}}, pages 462--475, Berlin, Heidelberg, 2003. Springer Berlin
  Heidelberg.

\bibitem{UMR11}
Raluca Uricaru, Alban Mancheron, and Eric Rivals.
\newblock Novel definition and algorithm for chaining fragments with
  proportional overlaps.
\newblock {\em Journal of Computational Biology}, 18(9):1141--1154, 2011.
\newblock \href {https://doi.org/10.1089/cmb.2011.0126}
  {\path{doi:10.1089/cmb.2011.0126}}.

\end{thebibliography}

\end{document}